\newcommand{\virg}[1]{``#1''}
\newcommand{\overbar}[1]{\mkern 1.5mu\overline{\mkern-1.5mu#1\mkern-1.5mu}\mkern 1.5mu}
\newtheorem{remark}{Remark}
\newtheorem{assumption}{Assumption}
\newtheorem{problem}{Problem}
\newtheorem{theorem}{Theorem}
\newtheorem{corollary}{Corollary}
\begin{document}
\title{\normalsize \textcolor{gray}{63rd IEEE Conference on Decision and Control, Accepted July 2024, Extended Version} \\[3mm] \Large Almost Global Trajectory Tracking for Quadrotors Using Thrust Direction Control on $\mathcal{S}^2$}
\author{Mirko Leomanni, Alberto Dionigi, Francesco Ferrante, Paolo Valigi, Gabriele Costante
\thanks{The authors are with the Department of Engineering, University of Perugia, 06125 Perugia, Italy. emails: \{mirko.leomanni, alberto.dionigi, francesco.ferrante, paolo.valigi, gabriele.costante\}@unipg.it.}}

% \thanks{The authors are with the Department of Engineering, University of Perugia, 06125 Perugia, Italy \{emails: mirko.leomanni@unipg.it, alberto.dionigi@studenti.unipg.it, francesco.ferrante@unipg.it, paolo.valigi@unipg.it, gabriele.costante@unipg.it\}.}}

\maketitle
\begin{abstract}
Many of the existing works on quadrotor control address the trajectory tracking problem by employing a cascade design in which the translational and rotational dynamics are stabilized by two separate controllers. The stability of the cascade is often proved by employing trajectory-based arguments, most notably, integral input-to-state stability. In this paper, we follow a different route and present a control law ensuring that a composite function constructed from the translational and rotational tracking errors is a Lyapunov function for the closed-loop cascade. In particular, starting from a generic control law for the double integrator, we develop a suitable attitude control extension, by leveraging a backstepping-like procedure. Using this construction, we provide an almost global stability certificate. The proposed design employs the unit sphere $\mathcal{S}^2$ to describe the rotational degrees of freedom required for position control. This enables a simpler controller tuning and an improved tracking performance with respect to previous global solutions. The new design is demonstrated via numerical simulations and on real-world experiments.
\end{abstract}

\section{Introduction}
Over the past few years, the scientific interest on quadrotor systems has steadily increased and the state of the art in quadrotor control has drastically
evolved \cite{mahony2012multirotor}. This comes not only from the versatility and broad applicability of these platforms \cite{longhi2017ubiquitous,longhi2018rfid}, but also from the fact that their dynamics evolve on a nonlinear
manifold, which makes the control problem nontrivial \cite{emran2018review}. Nowadays, a huge literature is available on such type of problems. Proportional-Integral-Derivative (PID) control is a commonly used solution and many PID variants have been proposed~\cite{Yu2015high,cao2015inner}. PID regulators provide adequate performance for simple setpoint stabilization. However, their application to more complex tasks requires a specialized design and a careful selection of the tuning parameters. To overcome these issues, more advanced control techniques have been explored. In \cite{mistler2001exact}, the trajectory tracking problem is addressed, by adopting a dynamic feedback linearization approach. Some extensions of this method are proposed in~\cite{lotufo2019control,leomanni2023robust} to enhance robustness against unmodeled dynamics and perturbations. To this purpose, backstepping \cite{cabecinhas2014nonlinear} and sliding mode \cite{lee2009feedback} control techniques have also been applied. Nonlinear model predictive control strategies are presented in \cite{torrente2021data,saviolo2022pitcn}. These are able to optimize performance while handling trajectory constraints. In most of the aforementioned works, stability results are only local.

Recently, a line of research has focused on global or almost global stabilization. Many of the contributions in this context address the control problem by employing a cascade design in which the translational and rotational dynamics of the quadrotor are stabilized by combining two separate controllers. The control strategies proposed in \cite{chaillet2014combining,roza2014class, naldi2016robust, invernizzi2018trajectory} ensure stability by requiring the position stabilizer to enforce an input-to-state stability (ISS) property with respect to small inputs. This is typically achieved by employing a nested saturation design (see. e.g., \cite{invernizzi2018trajectory}) which, however, may entail a poor transient performance \cite{marchand2005global}. To mitigate this drawback, \cite{invernizzi2019integral} proposes a hierarchical control design that establishes the stability of the cascade using a weaker property than ISS, namely, integral input-to-state stability (iISS) \cite{sontag1998comments}. This extends the class of allowable position stabilizers to more aggressive control laws. While ISS and iISS provide very useful qualitative information, they also display some drawbacks when quantitative statements are of interest. For instance, an explicit rate of convergence is often difficult to obtain \cite{grune2002input}.

\begin{figure}[t]
    \centering\vspace{2mm}
    \includegraphics[width=0.826\columnwidth]{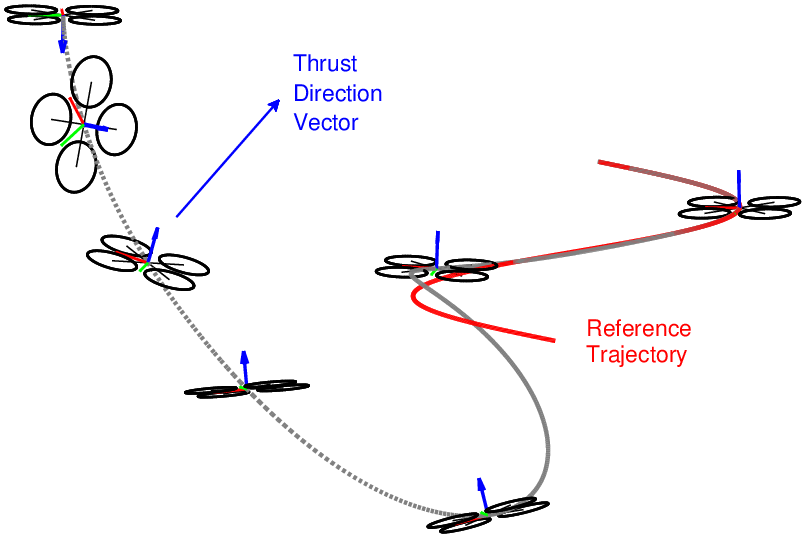}
    \caption{{Illustration of the proposed approach on a trajectory tracking maneuver in which the quadrtotor starts from an inverted flight configuration. The capability to perform such type of maneuvers is achieved by controlling the thrust direction vector directly on its configuration manifold $\mathcal{S}^2$}.}
    \label{fig:overview}
\end{figure}

Other recent works \cite{brescianini2018tilt, spitzer2020rotational, gamagedara2019geometric, kooijman2019trajectory} have noticed that, since the
translational dynamics of the quadrotor do not depend on its heading, it is natural to decouple the heading motion (i.e., the rotation about the thrust direction vector) from the tilt motion (i.e., the rotation of the thrust direction vector) for position control.
In particular, \cite{spitzer2020rotational} shows that separating the tilt and the heading configuration manifolds (the unit sphere $\mathcal{S}^2$ and unit circle $\mathcal{S}^1$, respectively) allows for improved position tracking capabilities with respect to using $\mathcal{SO}(3)$ rotations to compute the attitude error. In \cite{gamagedara2019geometric}, this idea is exploited to design a cascade control scheme for trajectory tracking. It is proven that the cascade enforces asymptotic stability in a local neighborhood of the equilibrium, whose size is dictated by complex relations among the tuning parameters. In \cite{kooijman2019trajectory}, a similar approach is pursued and an attitude control law achieving almost global stability on $\mathcal{S}^2\times\mathcal{S}^1$ is proposed. The control law is demonstrated in combination with a position controller. However, the stability analysis of the cascade is not carried out.

In this paper, we present a new control design methodology that allows a quadrotor to track a given position trajectory. The core idea behind our approach is to cast the tracking error dynamics in a form suitable for the application of recursive nonlinear control techniques, while operating directly in the configuration manifold of
the quadrotor. In particular, starting from a generic control law for the double integrator, we develop an attitude control extension defined on $\mathcal{S}^2$, by leveraging a backstepping-like procedure. Using this construction, we provide an almost global stability certificate, by means of a composite Lyapunov function. The proposed design overcomes the limitations of ISS-based formulations and extends the results in \cite{gamagedara2019geometric,kooijman2019trajectory} to cover the global stabilization of the tilt and position dynamics. The resulting controller has a simple structure and features a small number of tuning parameters. We demonstrate our design via numerical simulations and on real-world experiments.

The reminder of this work is organized as follows. Section~\ref{sec:2} formalizes the trajectory tracking problem, Section~\ref{sec:3} describes the proposed control desing, Section~\ref{sec:experiments} discusses the experimental results, and Section~\ref{sec:conclusion} draws the conclusions.

\emph{Notation}:
The set of real numbers is denoted by $\mathbb{R}$. The skew-symmetric cross product matrix representation of a vector $v=[v_x\,v_y\,v_z]^T\in\mathbb{R}^3$ is defined as
$ [v]_\times=
\begin{bsmallmatrix}
0&-v_z&v_y\\
v_z&0&-v_x\\
-v_y&v_x&0
\end{bsmallmatrix},
$
while the 2-norm of $v$ is denoted by $\|v\|$.
For convenience, we define the unit vector $\zeta=[0\; 0\; 1]^T.$ The identity matrix is denoted by $I$.

\section{Problem Formulation}\label{sec:2}

Two coordinate systems are employed in this work, namely, the inertial and body-fixed frames. The inertial frame originates at a fixed point in world coordinates. Its $Z$ axis points in the direction opposite to gravity, while the $X$ and $Y$ axes complete a right-handed triad. The body-fixed frame has its origin on the quadrotor and its axes are aligned to the principal axes of inertia of the vehicle. In particular, the body-fixed $Z$ axis is aligned with the thrust direction vector.

The quadrotor translational dynamics and its attitude kinematics are described by the model
\begin{eqnarray}
\ddot{p}&=& R^T \zeta f - \zeta g \label{sysmodel1}\\[1mm]
\dot{R}&=&-[\omega]_\times R\, \;, \label{sysmodel2}
\end{eqnarray}
where $p\in\mathbb{R}^3$ is the position vector of the center of mass of the quadrotor, expressed in inertial coordinates, $R\in\mathcal{SO}(3)$ is the attitude matrix which transforms a vector in inertial coordinates into a vector in body-fixed coordinates, $\omega\in\mathbb{R}^3$ is the angular velocity of the body-fixed frame relative to the inertial frame, expressed in body-fixed coordinates, $f\in\mathbb{R}$ is the vehicle acceleration along the thrust direction vector, and  $g=9.8$ m/s$^2$ is the standard gravity parameter.

In \eqref{sysmodel1}-\eqref{sysmodel2}, we treat $f$ and $\omega$ as control inputs, and neglect the angular velocity dynamics. This is done in order to simplify the control design problem. Indeed, the stabilization of the angular velocity dynamics can be tackled by applying a standard recursive design technique (e.g., backstepping \cite{kanellakopoulos1992toolkit}), once a suitable control law and a corresponding Lyapunov function are established for system \eqref{sysmodel1}-\eqref{sysmodel2}. Finding this pair is the focus of our discussion. In particular, we consider the following control problem.
\begin{problem}\label{pb1}
Find a feedback control law and a Lyapunov function ensuring that the output $p$ of system \eqref{sysmodel1}-\eqref{sysmodel2} asymptotically tracks a smooth time-varying reference $p_r\in \mathcal{P}$, where $\mathcal{P}\subset\mathbb{R}^3$ is a compact set.
\end{problem}

In order to solve Problem \ref{pb1}, we introduce the following assumption.
\begin{assumption}\label{A1}
Let $\xi_1\in\mathbb{R}^3$, $\xi_2\in\mathbb{R}^3$ and
\begin{equation}\label{distv}
d=\ddot{p}_r + \zeta g.
\end{equation}
There exist a positive constant $\epsilon$ and a differentiable control law $u_\xi(\xi_1,\xi_2,d)$ satisfying $\|u_\xi(\xi_1,\xi_2,d)\|\geq \epsilon$, such that the origin is a globally asymptotically stable equilibrium point for system
\begin{equation}\label{sysred}
\begin{array}{c c l}
\dot{\xi}_1&=& \xi_2 \\
\dot{\xi}_2&=& u_\xi(\xi_1,\xi_2,d) - d \, .
\end{array}
\end{equation}
Moreover, system \eqref{sysred} admits a differentiable Lyapunov function $V_\xi(\xi_1,\xi_2)$.
\end{assumption}

\begin{remark}
A necessary condition for the existence of a control law satisfying Assumption \ref{A1} is that $\|d\|\geq d_m$ for all times $t$, where $d_m$ is a positive constant (see, e.g., \cite{naldi2016robust}). Possible design options can be found in \cite{teel1992global} and \cite{forni2010family}.
\end{remark}

\section{Main Result}\label{sec:3}
We solve Problem \ref{pb1} in two steps. First, a convenient representation of the tracking error dynamics is derived. Then, the controller design is addressed by building on this representation.
\subsection{Tracking Error Dynamics}
The position and velocity tracking errors are described by the error variables
\begin{equation}\label{cchange}
\begin{array}{c c l}
x_1&=& p-p_r\\[1mm]
x_2&=& \dot{p}-\dot{p}_r \, .
\end{array}
\end{equation}
By differentiating \eqref{cchange} with respect to time and using \eqref{sysmodel1},\eqref{distv}, one obtains
\begin{equation}\label{reds}
\begin{array}{c c l}
\dot{x}_1&=& x_2\\[1mm]
\dot{x}_2&=& R^T \zeta f - d \, .
\end{array}
\end{equation}
Next, we define the \emph{desired thrust direction vector, expressed in the body-fixed frame}, as follows
\begin{equation}\label{auv}
x_3=R \frac{v}{\|v\|} \, ,
\end{equation}
where $v\in\mathbb{R}^3\setminus \{0\}$. Notice that, in this setting, $\zeta^T x_3$ is the cosine of the angle between the actual thrust direction vector and the desired one. By using, \eqref{auv}, one can rewrite the second equation in \eqref{reds} as
\begin{equation}\label{backstep}
\dot{x}_2=  \frac{v}{\|v\|} f- d +R^T(\zeta-x_3) f \, .
\end{equation}
From \eqref{sysmodel2} and \eqref{auv}, we get the following differential constraint
\begin{equation}\label{uvdyn}
\dot{x}_3=[x_3]_\times (\omega-R\omega_v) \, ,
\end{equation}
where
\begin{equation}\label{omegav}
\omega_v=\frac{[v]_\times}{\|v \|^2}\,\dot{v} \, .
\end{equation}
Additional details about the derivation of \eqref{uvdyn}-\eqref{omegav} are provided in Appendix \ref{appA1}.

By augmenting the first equation in \eqref{reds} with \eqref{backstep} and \eqref{uvdyn}, the following dynamic model is obtained
\begin{equation}\label{targetdyn}
\begin{array}{c c l}
\dot{x}_1&=& x_2\\[3mm]
\dot{x}_2&=&  \dfrac{v}{\|v\|}f- d + R^T(\zeta-x_3) f\\[4mm]
\dot{x}_3&=&[x_3]_\times (\omega-R\omega_v) \, .
\end{array}
\end{equation}
The vector $v$ in \eqref{targetdyn} plays the role of a \virg{virtual control}. This is employed next for control design.
\begin{remark}
System \eqref{targetdyn} is obtained by applying a backstepping-like technique to system \eqref{sysmodel2},\eqref{reds}. The adopted procedure takes advantage of the fact that $x_3$ evolves on the configuration manifold $\mathcal{S}^2$.
\end{remark}

\subsection{Control Design}
Let $x=[x_1^T\;x_2^T\;x_3^T]^T$. The key observation that we employ for control design is that Problem~\ref{pb1} can be recast as that of finding a \virg{virtual control} $v$ and a feedback control law which render $x=\overbar{x}$, with $\overbar{x}=[0\,0\,\zeta^T]^T$, an asymptotically stable equilibrium point for system \eqref{targetdyn}. We wish to determine a Lyapunov function $V$ for the resulting closed-loop system. To this aim, let us set
\begin{equation}\label{vdef}
v=u_\xi(x_1,x_2,d)
\end{equation}
in \eqref{auv} and \eqref{omegav}-\eqref{targetdyn}. For system \eqref{targetdyn}-\eqref{vdef}, we propose the continuous control law
\begin{equation}\label{claw}
\setlength{\arraycolsep}{3pt}
\begin{array}{c c l}
f&=& \| u_\xi(x_1,x_2,d) \|  \\[3mm]
(I-\zeta\zeta^T)\omega&=&  (I-\zeta\zeta^T)\big(R\omega_v \!+ [\zeta]_\times (\kappa_1 x_3+\beta)\big),
\end{array}
\end{equation}
where
%\begin{equation} \label{alphadef}
%\alpha=\kappa_1 x_3 +k_2(1+\zeta^T x_3)(\lambda^T \zeta)x_3 \,,
%\end{equation}
\begin{equation} \label{betadef}
\setlength{\arraycolsep}{3pt}
\begin{array}{lll}
\beta&=& k_2(1+\zeta^T x_3)(\zeta^T\lambda)x_3-\dfrac{k_2 (1+\zeta^T x_3)^2 c}{1-\zeta^T x_3+c}\lambda \\[3mm]
& &-\dfrac{k_2(1+\zeta^T x_3)(x_3^T(I-\zeta \zeta^T)\lambda)}{1-\zeta^T x_3+c}x_3 ,
\end{array}
\end{equation}
\begin{equation} \label{lambdadef}
\lambda=\|u_\xi(x_1,x_2,d)\|R \left(\dfrac{\partial V_\xi(x_1,x_2)}{\partial x_2}\right)^{\!T} \!,
\end{equation}
$\kappa_1 \geq k_1$ is a smooth tuning function, and $k_1$, $k_2$, $c$ are scalar positive constants.

The next result establishes that the proposed control law provides an almost global solution to Problem~\ref{pb1}.

\begin{theorem}\label{th1}
Let Assumption \ref{A1} be satisfied. Then, the set $\mathcal{A}=\{(x,R):\,x=\overbar{x},\, R\in\mathcal{SO}(3)\}$ of system \eqref{sysmodel2},\eqref{targetdyn}-\eqref{lambdadef} is asymptotically stable, with region of attraction containing
\begin{equation}\label{dattract}
\mathcal{D}=\left\{ (x,R)\in  \mathbb{R}^9 \times \mathcal{SO}(3) :\; x_3\in\mathcal{S}^2\setminus\{-\zeta\}\right\} .
\end{equation}
Moreover, the composite function
\begin{equation}\label{compositeLyap}
V=V_\xi( x_1,  x_2)+\frac{1-\zeta^T x_3}{2 k_2(1+\zeta^T x_3)}
\end{equation}
is a Lyapunov function for system \eqref{sysmodel2},\eqref{targetdyn}-\eqref{lambdadef}.
\end{theorem}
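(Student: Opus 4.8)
The plan is to verify directly that the composite function $V$ in \eqref{compositeLyap} is a Lyapunov function certifying asymptotic stability of $\mathcal{A}$ with domain $\mathcal{D}$, by showing that the cross-coupling terms produced by the translational and rotational subsystems cancel exactly. First I would establish that $V$ is positive definite with respect to $\mathcal{A}$ and proper on $\mathcal{D}$: the potential $\Phi(x_3)=\frac{1-\zeta^T x_3}{2k_2(1+\zeta^T x_3)}$ is nonnegative on $\mathcal{S}^2\setminus\{-\zeta\}$, vanishes only at $x_3=\zeta$, and diverges as $x_3\to-\zeta$; combined with the positive definiteness of $V_\xi$ from Assumption \ref{A1}, this makes $V$ positive definite about $\overbar{x}$ and shows that its sublevel sets inside $\mathcal{D}$ stay bounded away from the excluded antipode.

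Next I would differentiate $V$ along \eqref{targetdyn}. For the translational part, substituting $v=u_\xi$ and $f=\|u_\xi\|$ gives $\frac{v}{\|v\|}f=u_\xi$, so that the $x_2$-dynamics coincide with the reduced system \eqref{sysred} perturbed by $R^T(\zeta-x_3)f$. Hence $\dot V_\xi = \dot V_\xi^{\mathrm{red}} + \frac{\partial V_\xi}{\partial x_2}R^T(\zeta-x_3)f$, where $\dot V_\xi^{\mathrm{red}}\le 0$ by Assumption \ref{A1}. Using the definition \eqref{lambdadef} of $\lambda$, the perturbation term is exactly $\lambda^T(\zeta-x_3)$, so $\dot V_\xi \le \lambda^T(\zeta-x_3)$.

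The heart of the argument is the computation of $\dot\Phi$. Writing $s=\zeta^T x_3$, one has $\frac{d\Phi}{ds}=-\frac{1}{k_2(1+s)^2}$, so $\dot\Phi=-\frac{1}{k_2(1+s)^2}\zeta^T\dot x_3$ with $\dot x_3$ from \eqref{uvdyn}. The orthogonal component of the control \eqref{claw} is designed precisely so that the feedforward $R\omega_v$ cancels in $\zeta^T\dot x_3$; using the triple-product identity $\zeta^T[x_3]_\times=([\zeta]_\times x_3)^T$ and $[\zeta]_\times^2=\zeta\zeta^T-I$, this reduces to $\zeta^T\dot x_3 = x_3^T(I-\zeta\zeta^T)(\kappa_1 x_3+\beta)$. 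Substituting the expression \eqref{betadef} for $\beta$ and simplifying—this is the step I expect to be the main obstacle, since the three terms of $\beta$ must collapse, via the common factor $1-s+c$, into a single term—yields $x_3^T(I-\zeta\zeta^T)\beta = k_2(1+s)^2\lambda^T(\zeta-x_3)$. Consequently $\dot\Phi = -\frac{\kappa_1(1-\zeta^T x_3)}{k_2(1+\zeta^T x_3)} - \lambda^T(\zeta-x_3)$.

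Finally I would add the two contributions. The cross terms $\pm\lambda^T(\zeta-x_3)$ cancel, leaving $\dot V \le \dot V_\xi^{\mathrm{red}} - \frac{\kappa_1(1-\zeta^T x_3)}{k_2(1+\zeta^T x_3)}\le 0$, since $\kappa_1\ge k_1>0$ and $\Phi\ge 0$; thus $V$ is a Lyapunov function. Because the closed loop is non-autonomous through $d(t)$, I would conclude attractivity with Barbalat's lemma: $V$ is bounded and nonincreasing, so $\dot V\to 0$, which forces $x_3\to\zeta$ and $\dot V_\xi^{\mathrm{red}}\to 0$, hence $(x_1,x_2)\to 0$ by the global asymptotic stability of \eqref{sysred}; together these give $x\to\overbar{x}$. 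The region of attraction contains $\mathcal{D}$ because every initial condition in $\mathcal{D}$ has finite $V$, and the corresponding sublevel set is forward invariant and excludes a neighborhood of $x_3=-\zeta$, so trajectories remain in $\mathcal{D}$—the antipode being the sole topological obstruction on $\mathcal{S}^2$.
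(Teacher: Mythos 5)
Your proposal is correct and follows essentially the same route as the paper: the same verification that $V$ is positive definite and proper on $\mathcal{D}$, the same computation showing that the cross term $\lambda^T(\zeta-x_3)$ produced by the translational subsystem is exactly cancelled by the contribution of $\beta$ (your identity $x_3^T(I-\zeta\zeta^T)\beta=k_2(1+\zeta^Tx_3)^2\lambda^T(\zeta-x_3)$ is precisely what the paper obtains from \eqref{skwid}), leading to the same expression \eqref{proofdv} for $\dot V$. The only difference is the endgame—you invoke Barbalat's lemma where the paper uses LaSalle's invariance principle—which is a minor variation carrying the same (mild) burden of justifying that $\dot V^{\mathrm{red}}_\xi\to 0$ forces $(x_1,x_2)\to 0$ under Assumption~\ref{A1}.
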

\begin{proof}
Clearly, $V$ is a positive definite function with respect to the set $\mathcal{A}$ on the set $\mathcal{D}$ (notice that $\frac{1}{2}(x_3-\zeta)^T (x_3-\zeta)=1-\zeta^T x_3$), and it is continuously differentiable on $\mathcal{D}$. In particular, for all $(x,R)\in\mathcal{D}$, one has that
\begin{equation}\label{proofdvf}
%\setlength{\arraycolsep}{1pt}
%\begin{array}{l l l}
\dot{V}\!=\!\dot{V}_\xi^*(x_1,x_2)+\lambda^T (\zeta-x_3)-\dfrac{\zeta^T[x_3]_\times [\zeta]_\times }{k_2(1+\zeta^T x_3)^2}(\kappa_1 x_3+\beta) ,
%\dot{V}&=&\dfrac{\partial V_\xi(x_1,x_2)}{\partial x_1}x_2+ \dfrac{\partial V_\xi(x_1,x_2)}{\partial x_2}\left[ u_\xi(x_1,x_2,d)-d\right]\\[3mm]
%&&+\lambda^T(\zeta-x_3)-\dfrac{\zeta^T[x_3]_\times [\zeta]_\times }{k_2(1+\zeta^T x_3)^2}(\kappa_1 x_3+\beta) ,
%\end{array}
\end{equation}
where
\begin{equation}\label{dVxist}
\begin{split}
\dot{V}_\xi^*(x_1,x_2)=&\dfrac{\partial V_\xi(x_1,x_2)}{\partial x_1}x_2 \\&+ \dfrac{\partial V_\xi(x_1,x_2)}{\partial x_2}\left[ u_\xi(x_1,x_2,d)-d\right],
\end{split}
%\dot{V}\!=\!\dot{V}_\xi^*(x_1,x_2)+\lambda^T (\zeta-x_3)-\dfrac{\zeta^T[x_3]_\times [\zeta]_\times }{k_2(1+\zeta^T x_3)^2}(\kappa_1 x_3+\beta)  .
\end{equation}
$\lambda$ is given by \eqref{lambdadef}, and we used the identity $\zeta^T [x_3]_\times \omega=\zeta^T [x_3]_\times (I-\zeta \zeta^T)\omega$.
%For convenience, we rewrite \eqref{dVfull} as follows
%\begin{equation}\label{proofdvf}
%\dot{V}\!=\!\dot{V}_\xi^*(x_1,x_2)+\lambda^T (\zeta-x_3)-\dfrac{\zeta^T[x_3]_\times [\zeta]_\times }%%{k_2(1+\zeta^T x_3)^2}(\kappa_1 x_3+\beta)  .
%\end{equation}
To continue the proof, we exploit the following relationships
\begin{equation}\label{skwid}
\setlength{\arraycolsep}{2pt}
\begin{array}{l l l}
\lambda^T(\zeta-x_3)&=&(1-\zeta^Tx_3)\zeta^T\lambda-x_3^T (I-\zeta \zeta^T)\lambda\\[2mm]
\zeta^T [x_3]_\times [\zeta]_\times&=&x_3^T(I-\zeta \zeta^T)\\[2mm]
x_3^T(I-\zeta \zeta^T)x_3&=&(1+\zeta^Tx_3)(1-\zeta^Tx_3) \, .
\end{array}
\end{equation}
Substituting \eqref{betadef} into \eqref{proofdvf} and using \eqref{skwid}, we get
\begin{equation}\label{proofdv}
\dot{V}=\dot{V}_\xi^*(x_1,x_2)-\dfrac{\kappa_1(1-\zeta^T x_3)}{k_2(1+\zeta^T x_3)} \,.
\end{equation}
By using Assumption \ref{A1}, it turns out that $\dot{V}_\xi^*(x_1,x_2)$ is nonpositive. Likewise, the second term on the right-hand side of \eqref{proofdv} is nonpositive on $\mathcal{D}$. Therefore, it can be shown that $\mathcal{D}$ is forward invariant for system  \eqref{sysmodel2},\eqref{targetdyn}-\eqref{lambdadef}\footnote{This property can be easily proved by contradiction, by relying on the fact that $V$ grows unbounded when approaching the boundary of $\mathcal{D}$.}. Moreover, by Assumption \ref{A1} and \eqref{compositeLyap}, it can be shown that there exists a class $\mathcal{K}_\infty$ function $\rho$ such that for all $(x,R)\in\mathcal{D}$, $V=V(x)\geq \rho(\|x-\bar{x}\|) $. This, in turn, implies that any trajectory starting in $\mathcal{D}$ is bounded. Then, by invoking LaSalle's invarance principle, it follows that any solution from $\mathcal{D}$ converges to the largest invariant set $\mathcal{M}$ contained in $\Omega$, where
\begin{equation}
\Omega=\left\{(x,R)\in \mathcal{D} : \, \dot{V}_\xi^*(x_1,x_2)=0,\, x_3=\zeta\right\} .
\end{equation}
By Assumption \ref{A1}, the origin is the only invariant set in which $\dot{V}_\xi^*(\xi_1,\xi_2)=0$, along the trajectories of system \eqref{sysred}. Then, from \eqref{auv} and the structure of \eqref{targetdyn}, it can be concluded that $\mathcal{A}\supset\mathcal{M}$. Being $\mathcal{A}$ invariant, the theorem statement is proven.
\end{proof}

\begin{comment}
To conclude the proof, it must be shown that the control law  \eqref{claw}-\eqref{lambdadef} is well defined. To this aim, we use the identities
\begin{equation}
\setlength{\arraycolsep}{1pt}
\begin{array}{l l l}
\|[x_3]_\times \zeta\|&=&\| (I-\zeta^T \zeta)x_3\|= \sqrt{1-(\zeta^T x_3)^2}\\[2mm]
\lambda^T(\zeta-x_3)&=&(1-\zeta^Tx_3)\lambda^T\zeta-\lambda^T (I-\zeta^T \zeta)x_3
\end{array}
\end{equation}
which provide the following bound on the angular velocity input in \eqref{claw}
\begin{equation}\label{omegabound}
\setlength{\arraycolsep}{1pt}
\begin{array}{l l l}
\|(I-\zeta\zeta^T)\omega\| & \leq &  \|\omega_v\|+ \kappa_1  \sqrt{1-(\zeta^T x_3)^2}\\[2mm]
&&+k_2\|\lambda\|\sqrt{1-(\zeta^T x_3)^2}(1+\zeta^T x_3)\\[2mm]
&&+k_2\| \lambda \|(1+\zeta^T x_3)^2
\end{array}
\end{equation}
From \eqref{proofdv}, the compactness of the set $\mathcal{P}$, and the assumption $\|u_\xi(x_1,x_2,d)\|\geq \epsilon$, it follows that all signals in \eqref{omegabound} remain bounded.
%Then, it can be concluded that $\zeta^Tx_3$ and $\dot{V}$ in \eqref{proofdv} converge asymptotically to $1$ and $\dot{V}_z^*(x_1,x_2)$, respectively, uniformly in time, for any initial state in $\mathcal{D}$. This, in turn, implies that $(x_1,x_2)$ asymptotically converge to the origin according to Assumption \ref{A1}.
\end{comment}

\begin{remark}
In \eqref{claw}, the angular velocity of the quadrotor about the thrust direction vector, i.e., $\zeta^T \omega$, is not specified. This feature allows one to tackle the heading control problem separately from Problem \ref{pb1}, see, e.g., \cite{kooijman2019trajectory}.
\end{remark}

The next result stems directly from Theorem \ref{th1}.

\begin{corollary}\label{cor1}
Assume that
\begin{equation}\label{Vxiexp}
\dot{V}_\xi^*(\xi_1,\xi_2)\leq -\alpha {V}_\xi(\xi_1,\xi_2)
\end{equation}
for some $\alpha>0$, i.e., that $u_\xi(\xi_1,\xi_2,d)$ exponentially stabilizes the origin of system \eqref{sysred}. Then, the control law \eqref{claw}-\eqref{lambdadef} exponentially stabilizes the set $\mathcal{A}$.
\end{corollary}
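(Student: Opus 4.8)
The plan is to strengthen the asymptotic certificate of Theorem~\ref{th1} to an exponential one, by showing that the very same composite Lyapunov function $V$ in \eqref{compositeLyap} now obeys a strict exponential decay inequality $\dot V\le -c\,V$ for some $c>0$. The starting point is the closed-form expression for $\dot V$ already derived in the proof of Theorem~\ref{th1}, namely \eqref{proofdv}; no new computation of $\dot V$ is required, which is why the result \virg{stems directly} from the theorem.

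The key algebraic observation is that the second term of \eqref{proofdv} is merely a multiple of the second term of $V$ itself. Writing $V=V_\xi(x_1,x_2)+W$ with $W=\tfrac{1-\zeta^T x_3}{2k_2(1+\zeta^T x_3)}$, one has $\tfrac{\kappa_1(1-\zeta^T x_3)}{k_2(1+\zeta^T x_3)}=2\kappa_1 W$, so that \eqref{proofdv} reads $\dot V=\dot V_\xi^*-2\kappa_1 W$. Invoking the hypothesis \eqref{Vxiexp} together with the standing bound $\kappa_1\ge k_1$, and using $V_\xi\ge 0$, $W\ge 0$ on $\mathcal D$, yields $\dot V\le -\alpha V_\xi-2k_1 W\le -c\,(V_\xi+W)=-c\,V$ with $c=\min\{\alpha,2k_1\}>0$. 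This inequality holds throughout $\mathcal D$ and gives $V(t)\le V(0)\,e^{-ct}$.

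To convert this decay of $V$ into exponential stability of the set $\mathcal A$, I would combine it with quadratic sandwich bounds on $V$ in terms of the distance to $\mathcal A$. Exponential stability of the origin of \eqref{sysred}, certified by \eqref{Vxiexp}, supplies (via the standard converse Lyapunov characterization) bounds $a_1\|(x_1,x_2)\|^2\le V_\xi\le a_2\|(x_1,x_2)\|^2$; and, using the identity $1-\zeta^T x_3=\tfrac12\|x_3-\zeta\|^2$ recorded in Theorem~\ref{th1}, the term $W$ is likewise squeezed between multiples of $\|x_3-\zeta\|^2$ as long as $1+\zeta^T x_3$ stays bounded away from zero. On a neighborhood of $\mathcal A$ these combine into $b_1\|x-\bar{x}\|^2\le V\le b_2\|x-\bar{x}\|^2$, and the classical Lyapunov theorem for exponential stability then applies.

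The main obstacle is the behavior of $W$ near the antipodal configuration $x_3=-\zeta$: there $1+\zeta^T x_3\to 0$ and $W\to\infty$, so the quadratic \emph{upper} bound on $V$, and with it a uniform exponential-rate estimate, cannot hold on all of $\mathcal D$. The rate estimate is therefore genuinely local; however, the forward invariance of $\mathcal D$ and the global decay $\dot V\le -c\,V$ established above guarantee that every trajectory from $\mathcal D$ eventually enters the neighborhood where the sandwich bounds are valid and converges exponentially thereafter. This reproduces the \virg{almost global} picture of Theorem~\ref{th1}: exponential stability of $\mathcal A$ with region of attraction containing $\mathcal D$.
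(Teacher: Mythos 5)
Your proposal is correct and follows essentially the same route as the paper: the same decomposition $V=V_\xi+W$ (the paper's $V_a+V_b$), the same observation that the second term of \eqref{proofdv} equals $-2\kappa_1 W$, and the same bound $\dot V\le-\min\{\alpha,2k_1\}V$ via $\kappa_1\ge k_1$. Your additional discussion of the sandwich bounds needed to pass from decay of $V$ to exponential stability of $\mathcal{A}$ (and of why the quadratic upper bound fails near $x_3=-\zeta$, making the rate estimate local) is a careful elaboration of a step the paper simply asserts, not a different argument.
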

\begin{proof}
Let $V_a={V}_\xi(x_1,x_2)$ and $V_b=\frac{1-\zeta^T x_3}{2 k_2(1+\zeta^T x_3)}$, so that $V=V_a+V_b$ according to \eqref{compositeLyap}. By applying these definitions to \eqref{proofdv} and using \eqref{Vxiexp}, one obtains
\begin{equation}\label{vbound}
\dot{V}= -\alpha V_a -2\kappa_1 V_b \leq -\min\{\alpha,\,2 k_1\} V \, .
\end{equation}
The inequality \eqref{vbound} indicates that the set $\mathcal{A}$ is exponentially stable, thereby concluding the proof.
\end{proof}

Notice that Corollary \ref{cor1} can be used to determine an explicit estimate of the rate of convergence of the closed-loop system trajectories.

The control law \eqref{claw}-\eqref{lambdadef} has a simple structure. In particular, the angular velocity control input $\omega$ includes a reference term $R\omega_v$ (obtained from \eqref{omegav} and \eqref{vdef}), a proportional term $\kappa_1 x_3$, and an additional contribution $\beta$. The latter is influenced by the position and velocity errors $(x_1,x_2)$, as well as by the choice of $V_\xi(x_1,x_2)$, see \eqref{betadef}-\eqref{lambdadef}. The overall construction is such that the attitude control gain increases with increasing translational errors, which provides an effective mechanism to improve the transient performance. The role of the constant $k_2$ in \eqref{betadef} is to scale the correction term $\beta$. This corresponds to scaling $V_\xi(x_1,x_2)$, and has clearly no impact on closed-loop stability. The parameter $c$ ensures the continuity of the control law.

\section{Experiments}\label{sec:experiments}
In this section, the proposed design is demonstrated via numerical simulations and real-world experiments. In order to deploy the control scheme \eqref{claw}-\eqref{lambdadef}, a position control law $u_\xi(\xi_1,\xi_2,d)$ and a corresponding Lyapunov function $V_\xi(\xi_1,\xi_2)$ must be specified. For simplicity, herein we consider the following state feedback plus feedforward controller
\begin{equation}\label{auxcon}
u_\xi(\xi_1,\xi_2,d)=-K \begin{bmatrix}\xi_1^T & \xi_2^T\end{bmatrix}^T + d \, ,
\end{equation}
where $K$ is a stabilizing gain. The control law \eqref{auxcon} may not satisfy the requirement $\|u_\xi(\xi_1,\xi_2,d)\| \geq \epsilon$ put forth by Assumption \ref{A1}. In our experiments, such a condition is verified a posteriori\footnote{The set of initial conditions such that $\|u_\xi(x_1,x_2,d)\|$ vanishes for some $t$, along the trajectories of the closed-loop system \eqref{sysmodel2},\eqref{targetdyn}-\eqref{lambdadef}, is expected to have a very small measure.}. The Lyapunov function $V_\xi(\xi_1,\xi_2)$ is specified as follows
\begin{equation}\label{auxV}
V_\xi(\xi_1,\xi_2)=\begin{bmatrix}\xi_1^T & \xi_2^T\end{bmatrix}P\begin{bmatrix}\xi_1^T & \xi_2^T\end{bmatrix}^T ,
\end{equation}
where the positive definite matrix $P$ is obtained by solving the Lyapunov equation
\begin{equation}\label{posV}
(A-BK)^T P+ P(A-BK) + I=0 \, ,
\end{equation}
being $A$ and $B$ the state and input matrices of the double integrator system \eqref{sysred}, respectively.

Based on laboratory tests with our quadrotor platform, the gain matrix $K$ is chosen as
\begin{equation}\label{kmat}
K=
\begin{bmatrix}
4 & 0 & 0 & 2 & 0 & 0\\
0 & 4 & 0 & 0 & 2 & 0\\
0 & 0 & 4.5 & 0 & 0 & 3\\
\end{bmatrix} \, .
\end{equation}
\begin{comment}
%By solving \eqref{posV} with $K$ as in \eqref{kmat}, one gets
\begin{equation}\label{pmat}
P=
\begin{bmatrix}
1.5 & 0 & 0 & 0.125 & 0 & 0\\
0 & 1.5 & 0 & 0 & 0.125 & 0\\
0 & 0 & 1.25 & 0 & 0 & 0.1111\\
0.125 & 0 & 0 & 0.3215 & 0 & 0\\
0 & 0.125 & 0 & 0 & 0.3125 & 0\\
0 & 0 & 0.1111 & 0 & 0 & 0.2037
\end{bmatrix}
\end{equation}
\end{comment}
 Similarly to what is done in  \cite{kooijman2019trajectory}, the tuning function is $\kappa_1$ in \eqref{claw} is selected as follows
\begin{equation}\label{kappa1ch}
\kappa_1=\left\{
\begin{array}{c l}
k_1 & \text{if}\; \zeta^T x_3\geq 0 \\[2mm]
\dfrac{k_1}{\sqrt{1-(\zeta^T x_3)^2}} & \text{otherwise} \, .
\end{array}
\right.
\end{equation}
This choice is known to improve the convergence speed when the quadrotor starts from an upside down configuration \cite{spitzer2020rotational}.

The results discussed next are obtained by using \eqref{claw}-\eqref{lambdadef} and  \eqref{auxcon}-\eqref{kappa1ch}, along with ${k_1=1.5}$, ${k_2=0.05}$, ${c=0.1}$ and $\zeta^T\omega=0$. For digital implementation, the control law is discretized with a sampling frequency of $100$ Hz.

\subsection{Numerical Simulations}
A numerical simulation campaign is carried out on model \eqref{sysmodel1}-\eqref{sysmodel2} to verify the obtained theoretical results. In these simulations, the reference trajectory is specified as follows
\begin{equation}\label{rtr}
p_r(t)=\begin{bmatrix}0.38\, t & 0.6\sin\left(\dfrac{2\pi t}{10}\right) & 1\end{bmatrix}^T \; \text{m} \,.
\end{equation}

A first set of simulations is performed to evaluate the capability of the quadrotor to track the trajectory \eqref{rtr}, starting from different initial conditions. In particular, we set $p_r(0)=[x_0\;y_0\;z_0]^T$, $\dot{p}(0)=0$ and $R(0)=R_\theta(\theta_0) R_\phi(\phi_0)$, where $R_\theta$ and $R_\phi$ are elemental rotations about the quadrotor pitch and roll axes by angles $\theta_0$ and $\phi_0$, respectively. The parameters $x_0$, $y_0$, $z_0$, $\theta_0$, $\phi_0$ are randomized using uniform distributions taking values, respectively, in the intervals $[-5,\, 0]$ m, $[-2.5,\, 2.5]$ m, $[1,\, 6]$ m, $(-\pi,\, \pi)$ rad, $(-\pi,\, \pi)$ rad. The position trajectories followed by the quadrotor over 100 simulation runs, lasting 20 second each, are depicted in Fig.~\ref{dronetraj}. The initial transient of the corresponding position errors is reported in Fig.~\ref{trackerr}, together with the thrust direction error response (evaluated as $\eta=\arccos(\zeta^T x_3)$). It can be seen that the quadrotor asymptotically tracks the reference trajectory for all realizations, even those starting close to the boundary of the stability region (i.e., to the configuration $\eta=\pi$ rad).

\begin{figure}[t]
  \centering
  \includegraphics[width=0.95\columnwidth]{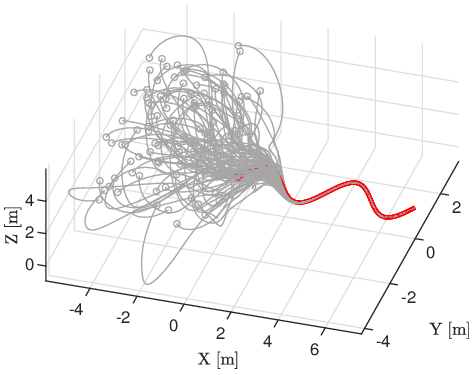}\\
  \caption{Position trajectories followed by the quadrotor over 100 simulation runs (thin gray lines), and reference position trajectory $p_r$ (red line). The initial conditions are marked with a circle.}\label{dronetraj}
\end{figure}
\begin{figure}[t]
  \centering
  \includegraphics[width=0.95\columnwidth]{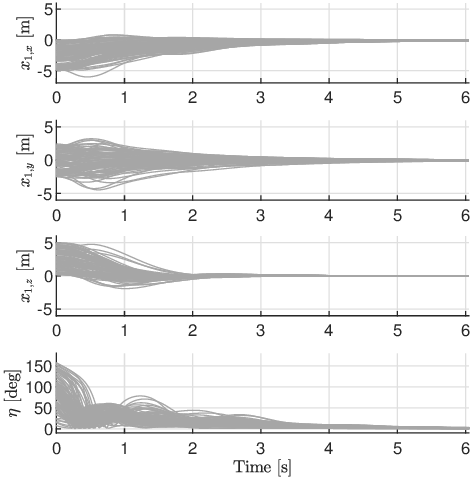}\\
  \caption{Initial transient of the position errors $x_1=[x_{1,x}\,x_{1,y}\,x_{1,z}]^T$ and of the trust direction error $\eta$, corresponding to the trajectories in Fig.~\ref{dronetraj}.}\label{trackerr}
\end{figure}
\begin{figure}[t]
  \centering
  \includegraphics[width=0.95\columnwidth]{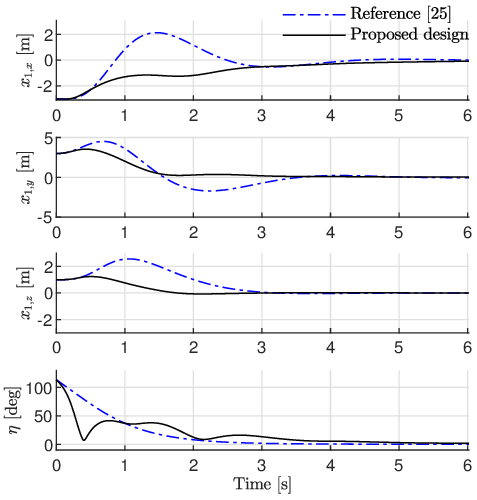}\\
  \caption{Initial transient of the position errors $x_1=[x_{1,x}\,x_{1,y}\,x_{1,z}]^T$ and of the trust direction error $\eta$ for the proposed control law (black solid line) and the control law in \cite{kooijman2019trajectory} (blue dash-dotted line).}\label{trackbsl}
\end{figure}
\begin{figure}[t]
  \centering
  \includegraphics[width=0.95\columnwidth]{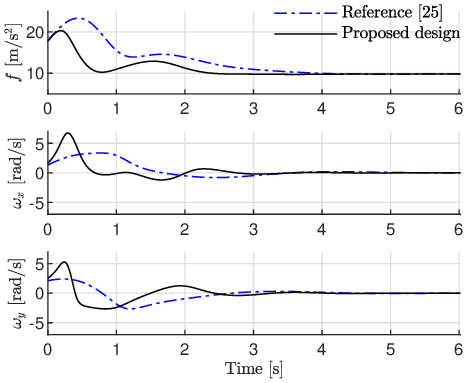}\\
  \caption{Control inputs $f$, $\omega_x$ and $\omega_y$ corresponding to the trajectories in Fig.~\ref{trackbsl}; $\omega_x$ and $\omega_y$ denote the first two entries of $\omega$.}\label{inpbsl}
\end{figure}
\begin{figure}[t]
  \centering
  \includegraphics[width=0.95\columnwidth]{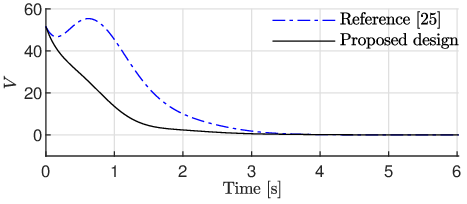}\\
  \caption{Evolution of the Lyapunov function $V$ along the closed-loop system trajectories.}\label{vbsl}
\end{figure}
Next, the proposed controller is compared with that in \cite{kooijman2019trajectory} on an example featuring $x_0=-3$ m, $y_0=3$ m, $z_0=2$ m, $\theta_0=0$ rad, $\phi_0=1$ rad. The thrust direction control law in \cite{kooijman2019trajectory} is recovered by neglecting, in \eqref{claw}, the correction term $\beta$. The tracking errors obtained with the two controllers are depicted in Fig.~\ref{trackbsl}. It can be seen that the proposed design achieves a much better transient performance than the comparison one with regard to position tracking, thanks to a more aggressive initial regulation of the thrust direction vector. This is due to the adaptation mechanism involved in \eqref{betadef}, which is a distinguishing feature of our design. The control inputs returned by the two controllers are depicted in Fig.~\ref{inpbsl}. The proposed control law incurs higher peak values of the angular velocity inputs. This is expected, given the faster transient response. Nevertheless, the overall energy expenditure is significantly smaller, mainly due to a reduced thrust control effort. Figure~\ref{vbsl} shows the evolution of the Lyapunov function $V$. It is confirmed that such a function decreases monotonically along the trajectories produced by the proposed controller. This is not the case for the controller in \cite{kooijman2019trajectory}.

The above results clearly demonstrate the global tracking capability and the effectiveness of our design.

\subsection{Real-World Validation}

Real-world experiments are carried out using a laboratory quadrotor platform, in order to evaluate the impact of unmodeled dynamics and perturbations on our design. The quadrotor platform is characterized by a carbon-fiber frame with a symmetric X shape and a diagonal of $0.2$ m. The vehicle is equipped with a PixRacer Pro flashed with Ardupilot\footnote{\url{https://ardupilot.org/}} for low-level control, and with an nVidia Jetson Xavier NX as an additional computation unit. The total weight of the vehicle is of about 1.0 kg. The Robot Operating System
%\footnote{\url{https://ros.org/}}
(ROS) is employed for communication. In particular, we use the package MAVROS \footnote{\url{https://github.com/mavlink/mavros}} to directly send to the low-level controller the thrust and angular velocity commands computed by the proposed control law onboard the Jetson unit. We employ the OptiTrack
%\footnote{\url{https://optitrack.com/}}
motion capture system to obtain the pose information needed for controller feedback. The quadrotor is shown in Figure \ref{fig:drone-setup}, and it is slightly unbalanced due to the presence of the Jetson module.

\begin{figure}[t]
    \centering
    \includegraphics[width=0.95\columnwidth]{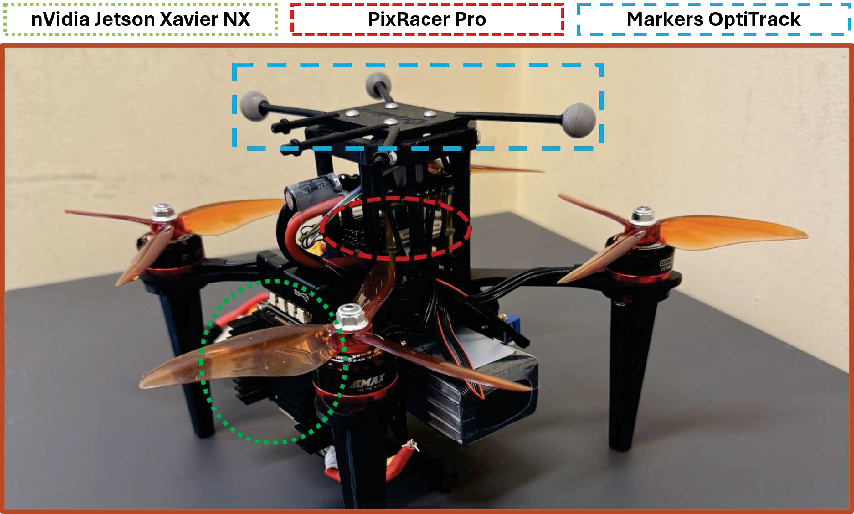}
    \caption{Quadrotor platform employed for real-world validation. The following components are higlighted: (i) nVidia Jetson Xavier NX; (ii) PixRacer Pro; (iii)  OptiTrack markers.}
    \label{fig:drone-setup}
\end{figure}
\begin{figure}[t]
  \centering
  \includegraphics[width=0.95\columnwidth]{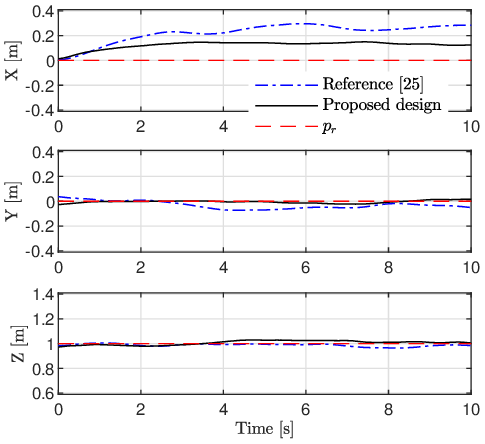}
  \caption{Comparison of the control law in \cite{kooijman2019trajectory} with the proposed one in the hovering test.}\label{exp1}
\end{figure}
\begin{figure}[t]
  \centering
  \includegraphics[width=0.95\columnwidth]{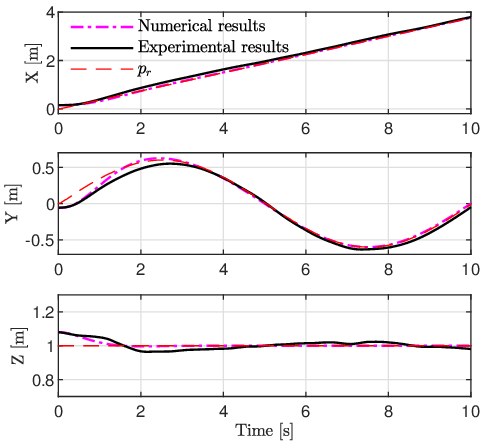}
  \caption{Comparison of numerical and experimental results for trajectory tracking.}\label{exp2}
\end{figure}
\begin{figure}[t]
  \centering
  \includegraphics[width=0.95\columnwidth]{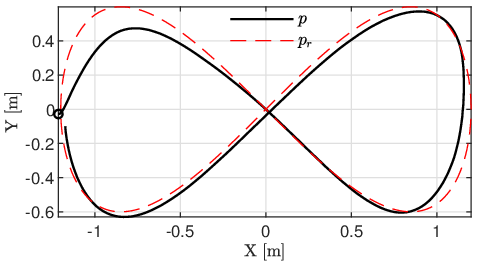}
  \caption{Planar projection of an 8-shaped trajectory that has been tested in a laboratory experiment. In this experiment, the steady-state error visible in Fig. \ref{exp1} is compensated for by adding a constant offset to the first entry of the error vector $x_1$. The initial (hovering) condition is marked with a circle.}\label{exp3}
\end{figure}

As a first experiment, the proposed control law is compared with that in \cite{kooijman2019trajectory} on a simple hovering test.
%to better elucidate the role of $\beta$ in \eqref{claw}-\eqref{betadef}.
In particular, the quadrotor is driven towards the initial hovering position $p_r=[0\,0\,1]^T$ m, and each controller is subsequently activated (at $t=0$) to keep this position. The results of this experiment are depicted in Fig. \ref{exp1}. Both controllers are able to maintain hovering flight, but display an offset along the X-axis with respect to the setpoint $p_r$. This is not surprising, in view of the aforementioned quadrotor unbalance and the lack of integral action in  the position control law \eqref{auxcon}. Nevertheless, the steady-state error obtained with the proposed design ($0.15$ m) is half of that produced by the design in \cite{kooijman2019trajectory} ($0.3$ m). This shows that, besides improving the transient performance, the inclusion of the term $\beta$ in \eqref{claw} can significantly reduce the steady-state error due to perturbations. The addition of integral terms to the control law is an interesting topic that will be investigated in future iterations of this work.

A second experiment is done to evaluate the capability of the quadrotor to track the reference trajectory \eqref{rtr}. Figure \ref{exp2} compares the results of this experiment with those obtained numerically. Overall, there is a close matching between numerical and experimental data, indicating that our design is endowed with some intrinsic robustness.

More aggressive maneuvers have also been tested, see for instance the 8-shaped trajectory in Fig. \ref{exp3}, where the proposed controller retains a good tracking performance. %Qualitative results are shown in the attached video.

\section{Conclusion}\label{sec:conclusion}

A new control design methodology has been presented for quadrotor trajectory tracking. The proposed design employs the unit sphere as the configuration manifold for thrust direction control. This allows for the derivation of a high-performance continuous stabilizer featuring almost global stability properties. The resulting control law has been extensively tested both numerically and on real-world experiments, showing consistent performance figures, except possibly for small offsets due to unmodeled perturbations. Future research will tackle the extension of the obtained stability results to the angular velocity subsystem and the inclusion of integral action in the control law for disturbance rejection.

\appendix
%\section{Appendix}
\subsection{Derivation of the auxiliary direction vector dynamics}\label{appA1}
The following steps are employed to derive \eqref{uvdyn}-\eqref{omegav}. The time derivative of $v/\|v \|$ can be expressed as
\begin{equation}\label{dnv}
\frac{\text{d}}{\text{d} t}\left( \frac{v}{\|v\|}\right)=\left(I-\frac{v v^T}{\|v \|^2}\right)\frac{\dot{v}}{\|v \|}\,.
\end{equation}
Observing that
\begin{equation}
I-\frac{v v^T}{\|v \|^2}=-\frac{[v]_\times^2}{\|v \|^2}\,,
\end{equation}
one can rewrite \eqref{dnv} as
\begin{equation}\label{dnv2}
\frac{\text{d}}{\text{d} t}\left( \frac{v}{\|v\|}\right)=-\frac{[v]_\times}{\| v \|} \left(\frac{[v]_\times}{\|v \|^2}\,{\dot{v}}\right)=[\omega_v]_\times \frac{v}{\| v \|}\,,
\end{equation}
where $\omega_v$ is given by \eqref{omegav}.
Differentiating \eqref{auv} with respect to time and using \eqref{sysmodel2}, \eqref{auv} and \eqref{dnv2}, one gets
\begin{equation}\label{dx3app}
\dot{x}_3=R [\omega_v]_\times \frac{v}{\| v \|}-[\omega]_\times x_3 =R [\omega_v]_\times R^T x_3-[\omega]_\times x_3 \,.
\end{equation}
Applying the identity $R [\omega_v]_\times R^T=[R \omega_v]_\times$
in \eqref{dx3app} gives~\eqref{uvdyn}.

\balance
\bibliographystyle{IEEEtran}
\bibliography{biblio}

% Generated by IEEEtran.bst, version: 1.13 (2008/09/30)
\begin{thebibliography}{10}
\providecommand{\url}[1]{#1}
\csname url@samestyle\endcsname
\providecommand{\newblock}{\relax}
\providecommand{\bibinfo}[2]{#2}
\providecommand{\BIBentrySTDinterwordspacing}{\spaceskip=0pt\relax}
\providecommand{\BIBentryALTinterwordstretchfactor}{4}
\providecommand{\BIBentryALTinterwordspacing}{\spaceskip=\fontdimen2\font plus
\BIBentryALTinterwordstretchfactor\fontdimen3\font minus
  \fontdimen4\font\relax}
\providecommand{\BIBforeignlanguage}[2]{{%
\expandafter\ifx\csname l@#1\endcsname\relax
\typeout{** WARNING: IEEEtran.bst: No hyphenation pattern has been}%
\typeout{** loaded for the language `#1'. Using the pattern for}%
\typeout{** the default language instead.}%
\else
\language=\csname l@#1\endcsname
\fi
#2}}
\providecommand{\BIBdecl}{\relax}
\BIBdecl

\bibitem{mahony2012multirotor}
R.~Mahony, V.~Kumar, and P.~Corke, ``Multirotor aerial vehicles: Modeling,
  estimation, and control of quadrotor,'' \emph{IEEE Robotics and Automation
  Magazine}, vol.~19, no.~3, pp. 20--32, 2012.

\bibitem{longhi2017ubiquitous}
M.~Longhi and G.~Marrocco, ``Ubiquitous flying sensor antennas: Radiofrequency
  identification meets micro drones,'' \emph{IEEE Journal of Radio Frequency
  Identification}, vol.~1, no.~4, pp. 291--299, 2017.

\bibitem{longhi2018rfid}
M.~Longhi, Z.~Taylor, M.~Popovi{\'c}, J.~Nieto, G.~Marrocco, and R.~Siegwart,
  ``Rfid-based localization for greenhouses monitoring using {MAVs},'' in
  \emph{2018 IEEE-APS Topical Conference on Antennas and Propagation in
  Wireless Communications (APWC)}, 2018, pp. 905--908.

\bibitem{emran2018review}
B.~J. Emran and H.~Najjaran, ``A review of quadrotor: An underactuated
  mechanical system,'' \emph{Annual Reviews in Control}, vol.~46, pp. 165--180,
  2018.

\bibitem{Yu2015high}
Y.~Yu, S.~Yang, M.~Wang, C.~Li, and Z.~Li, ``High performance full attitude
  control of a quadrotor on {$\mathcal{SO}$(3)},'' in \emph{2015 IEEE
  International Conference on Robotics and Automation}, 2015, pp. 1698--1703.

\bibitem{cao2015inner}
N.~Cao and A.~F. Lynch, ``Inner--outer loop control for quadrotor {UAVs} with
  input and state constraints,'' \emph{IEEE Transactions on Control Systems
  Technology}, vol.~24, no.~5, pp. 1797--1804, 2015.

\bibitem{mistler2001exact}
V.~Mistler, A.~Benallegue, and N.~M'sirdi, ``Exact linearization and
  noninteracting control of a 4 rotors helicopter via dynamic feedback,'' in
  \emph{Proceedings of the 10th IEEE International Workshop on Robot and Human
  Interactive Communication}, 2001, pp. 586--593.

\bibitem{lotufo2019control}
M.~A. Lotufo, L.~Colangelo, and C.~Novara, ``Control design for {UAV}
  quadrotors via embedded model control,'' \emph{IEEE Transactions on Control
  Systems Technology}, vol.~28, no.~5, pp. 1741--1756, 2019.

\bibitem{leomanni2023robust}
M.~Leomanni, F.~Ferrante, A.~Dionigi, G.~Costante, P.~Valigi, and M.~L.
  Fravolini, ``Quadrotor control system design for robust monocular visual
  tracking,'' \emph{IEEE Transactions on Control Systems Technology}, 2024,
  doi: 10.1109/TCST.2024.3387228.

\bibitem{cabecinhas2014nonlinear}
D.~Cabecinhas, R.~Cunha, and C.~Silvestre, ``A nonlinear quadrotor trajectory
  tracking controller with disturbance rejection,'' \emph{Control Engineering
  Practice}, vol.~26, pp. 1--10, 2014.

\bibitem{lee2009feedback}
D.~Lee, H.~Jin~Kim, and S.~Sastry, ``Feedback linearization vs. adaptive
  sliding mode control for a quadrotor helicopter,'' \emph{International
  Journal of Control, Automation and Systems}, vol.~7, no.~3, pp. 419--428,
  2009.

\bibitem{torrente2021data}
G.~Torrente, E.~Kaufmann, P.~F{\"o}hn, and D.~Scaramuzza, ``Data-driven {MPC}
  for quadrotors,'' \emph{IEEE Robotics and Automation Letters}, vol.~6, no.~2,
  pp. 3769--3776, 2021.

\bibitem{saviolo2022pitcn}
A.~Saviolo, G.~Li, and G.~Loianno, ``Physics-inspired temporal learning of
  quadrotor dynamics for accurate model predictive trajectory tracking,''
  \emph{IEEE Robotics and Automation Letters}, vol.~7, no.~4, pp.
  10\,256--10\,263, 2022.

\bibitem{chaillet2014combining}
A.~Chaillet, D.~Angeli, and H.~Ito, ``Combining {iISS} and {ISS} with respect
  to small inputs: the strong {iISS} property,'' \emph{IEEE Transactions on
  Automatic Control}, vol.~59, no.~9, pp. 2518--2524, 2014.

\bibitem{roza2014class}
A.~Roza and M.~Maggiore, ``A class of position controllers for underactuated
  {VTOL} vehicles,'' \emph{IEEE Transactions on Automatic Control}, vol.~59,
  no.~9, pp. 2580--2585, 2014.

\bibitem{naldi2016robust}
R.~Naldi, M.~Furci, R.~G. Sanfelice, and L.~Marconi, ``Robust global trajectory
  tracking for underactuated {VTOL} aerial vehicles using inner-outer loop
  control paradigms,'' \emph{IEEE Transactions on Automatic Control}, vol.~62,
  no.~1, pp. 97--112, 2016.

\bibitem{invernizzi2018trajectory}
D.~Invernizzi and M.~Lovera, ``Trajectory tracking control of thrust-vectoring
  {UAVs},'' \emph{Automatica}, vol.~95, pp. 180--186, 2018.

\bibitem{marchand2005global}
N.~Marchand and A.~Hably, ``Global stabilization of multiple integrators with
  bounded controls,'' \emph{Automatica}, vol.~41, no.~12, pp. 2147--2152, 2005.

\bibitem{invernizzi2019integral}
D.~Invernizzi, M.~Lovera, and L.~Zaccarian, ``Integral {ISS-based} cascade
  stabilization for vectored-thrust {UAVs},'' \emph{IEEE Control Systems
  Letters}, vol.~4, no.~1, pp. 43--48, 2019.

\bibitem{sontag1998comments}
E.~D. Sontag, ``Comments on integral variants of {ISS},'' \emph{Systems \&
  Control Letters}, vol.~34, no.~1, pp. 93--100, 1998.

\bibitem{grune2002input}
L.~Grune, ``Input-to-state dynamical stability and its {Lyapunov} function
  characterization,'' \emph{IEEE Transactions on Automatic Control}, vol.~47,
  no.~9, pp. 1499--1504, 2002.

\bibitem{brescianini2018tilt}
D.~Brescianini and R.~D'Andrea, ``Tilt-prioritized quadrocopter attitude
  control,'' \emph{IEEE Transactions on Control Systems Technology}, vol.~28,
  no.~2, pp. 376--387, 2018.

\bibitem{spitzer2020rotational}
A.~Spitzer and N.~Michael, ``Rotational error metrics for quadrotor control,''
  in \emph{Perception, Learning, and Control for Autonomous Agile Vehicles
  (IROS 2020 Workshop)}, 2020, arXiv:2011.11909v1.

\bibitem{gamagedara2019geometric}
K.~Gamagedara, M.~Bisheban, E.~Kaufman, and T.~Lee, ``Geometric controls of a
  quadrotor {UAV} with decoupled yaw control,'' in \emph{American Control
  Conference}, 2019, pp. 3285--3290.

\bibitem{kooijman2019trajectory}
D.~Kooijman, A.~P. Schoellig, and D.~J. Antunes, ``Trajectory tracking for
  quadrotors with attitude control on $\mathcal{S}^{2}\times
  \mathcal{S}^{1}$,'' in \emph{18th European Control Conference (ECC)}, 2019,
  pp. 4002--4009.

\bibitem{kanellakopoulos1992toolkit}
I.~Kanellakopoulos, P.~Kokotovic, and A.~Morse, ``A toolkit for nonlinear
  feedback design,'' \emph{Systems \& Control Letters}, vol.~18, no.~2, pp.
  83--92, 1992.

\bibitem{teel1992global}
A.~R. Teel, ``Global stabilization and restricted tracking for multiple
  integrators with bounded controls,'' \emph{Systems \& control letters},
  vol.~18, no.~3, pp. 165--171, 1992.

\bibitem{forni2010family}
F.~Forni, S.~Galeani, and L.~Zaccarian, ``A family of global stabilizers for
  quasi-optimal control of planar linear saturated systems,'' \emph{IEEE
  Transactions on Automatic Control}, vol.~55, no.~5, pp. 1175--1180, 2010.

\end{thebibliography}

\end{document}